\documentclass[conference]{IEEEtran}
\IEEEoverridecommandlockouts
\usepackage{cite}
\usepackage{graphicx}
\usepackage{textcomp}
\usepackage{xcolor}
\usepackage{array}
\usepackage{tabularx}
\usepackage{makecell}
\usepackage{array}
\usepackage{tabularx}
\usepackage{makecell}
\usepackage{multirow}
\newcolumntype{Y}{>{\centering\arraybackslash}X}

\usepackage{amsmath,amssymb,amsfonts}
\usepackage{amsthm}
\usepackage[colorlinks=true, linkcolor=blue, citecolor=blue, urlcolor=blue]{hyperref}
\usepackage{algorithm}
\usepackage{algpseudocode}
\usepackage{amsmath, amssymb}
\usepackage{algorithm}
\usepackage{enumitem}
\usepackage{booktabs}
\usepackage{algpseudocode}

\newtheorem{assumption}{Assumption}
\newtheorem{definition}{Definition}

\newtheorem{proposition}{Proposition}

\def\BibTeX{{\rm B\kern-.05em{\sc i\kern-.025em b}\kern-.08em
    T\kern-.1667em\lower.7ex\hbox{E}\kern-.125emX}}
\begin{document}
\bstctlcite{IEEEexample:BSTcontrol}
\title{\LARGE Prior-Guided Movable Antenna Control for Agile Multi-Path Sensing
}

\vspace{-3mm}
\author{
Jaehong~Kim$^{*}$, Jihong~Park$^{\dagger}$, 
Changsheng You$^{\ddagger}$, and Seung\mbox{-}Woo~Ko$^{*}$\\
\small $^{*}$Inha University, Korea, e-mail: kimjaehong@inha.edu, swko@inha.ac.kr\\
\small $^{\dagger}$Singapore University of Technology and Design, Singapore, e-mail: 
jihong\_park@sutd.edu.sg \\
\small $^{\ddagger}$ Southern University of Science and Technology, China, e-mail: youcs@sustech.edu.cn
\vspace{-5mm}
}

\maketitle

\begin{abstract}
Multi-path sensing, which aims to extract the geometric attributes of multiple propagation paths, is expected to be a key functionality of 6G. A \emph{movable antenna} (MA) can enable this functionality by creating a synthetic aperture through sequential mechanical motion. However, existing MA-based sensing methods typically rely on exhaustive scanning over the entire movable plate, resulting in significant control overhead and sensing latency, which limits their practicality for agile sensing. To address this challenge, this paper develops a prior-guided agile multi-path sensing framework that leverages weak prior \emph{angle-of-arrival} (AoA) statistics as side information. The proposed framework comprises two steps. First, the movable plate's three-dimensional orientation is optimized only once to maximize path visibility while preserving path discriminability, both induced from Fisher information analysis. Second, only two predetermined linear MA scans are made on the tilted plate to estimate the elevation and azimuth AoAs from the resulting sequence of received signals. By incorporating the prior AoA statistics, a \emph{maximum a posteriori} (MAP)-based AoA estimation algorithm is developed. With only one orientation control and two linear scans, the proposed framework enables agile multi-path sensing with significantly reduced control overhead and latency, while achieving AoA estimation accuracy approaching that of the single-path benchmark. 
\end{abstract}

\begin{IEEEkeywords}
Multi-path sensing, movable antenna, orientation control, linear scan, angle-of-arrival.   
\end{IEEEkeywords}

\section{Introduction}
6G is expected to exploit radio signals not only for communication but also for sensing \cite{FLiu2022}, since the intrinsic multi-path nature of radio waves enables the detection of objects and the inference of environmental geometry even under \emph{Non-Line-of-Sight}~(NLoS) conditions \cite{Lotti2023, Yang2025}. We refer to the process of extracting the geometric attributes of these propagation paths, specifically their \emph{Angle-of-Arrivals}~(AoAs) and \emph{time-of-arrival}~(ToAs), without performing full channel estimation, as \emph{multi-path sensing}. This work focuses on AoA estimation, while extension to other components is left for future work. 

A promising enabler for multi-path sensing is the \emph{movable antenna}~(MA), which forms a synthetic aperture by mechanically relocating the antenna over a confined spatial region, referred to as the movable plate \cite{Zhou2025, Wu2026}. The resulting signals, collected from multiple locations and embedding rich spatial information from diverse directional perspectives, enable the resolution of multi-path propagation and the extraction of its constituent components, even in challenging scenarios where multiple paths arrive from similar directions and are therefore difficult to distinguish.

Several works on MA-based sensing have been proposed in the literature by applying optimization techniques into the MA framework. For example, the technique of compressed sensing is employed in \cite{MA2023}, where multi-path components, including AoAs and ToAs, are estimated by alternatively scanning the movable plates at the transmitter and receiver. In \cite{ZXiao2024}, the technique of \emph{orthogonal matching pursuit}~(OMP) is applied to multi-path sensing by correlating the received pilots with a position-dependent dictionary constructed from the MA locations and progressively canceling already detected path parameters from the received signals.
This idea is further extended to wideband scenario in \cite{SCao2025}, which develops a \emph{simultaneous OMP}~(SOMP)-based framework by jointly exploiting the common sparsity across subcarriers. Despite their effectiveness, these approaches rely on scanning the entire movable plate, which incurs substantial mechanical overhead and excessive sensing latency. Consequently, they are not well suited to agile sensing, another key requirement of 6G beyond estimation accuracy \cite{Chen2022}. 

To realize agile multi-path sensing, this work exploits coarse AoA statistics inferred from surrounding environments, such as nearby static buildings and obstacles, which can be obtained through various technique, including large multimodal model \cite{Kim2026} and digital twins \cite{Abouamer2025}. We treat such prior statistics as weak but useful side information and develop an agile MA control comprising two steps. First, the movable plate's \emph{three-dimensional}~(3D) orientation is optimized only once so as to maximize the visibility of all propagation paths while maintaining sufficient path discriminability, both characterized from a Fisher information perspective. Second, two predetermined linear MA scans are performed on the tilted plate, from which the elevation and azimuth AoAs are estimated from the resulting sequence of received signals. By incorporating the prior AoA statistics, we develop a \emph{maximum a posteriori}~(MAP)-based AoA estimation algorithm, whose effectiveness is verified by simulation that the AoA estimation accuracy approaches that of the single-path benchmark. Notably, the proposed framework 
achieves accurate multi-path AoA resolution using only one orientation optimization and two linear MA scans, thereby significantly reducing control overhead and sensing latency, which is the main contribution of this work.

\section{System Model}
This section presents the system model for the considered multi-path sensing scenario, including the geometry and signal model, and formulates the problem addressed in this work.

\subsection{Scenario Description}\label{AA}
Consider a sensing network comprising a transmitter and a receiver equipped with a single rigid antenna and a single MA, respectively.  The MA is mounted on a mechanically adjustable \emph{movable plate} and can move over the plate, thereby enabling the formation of a synthetic aperture. We define an initial \emph{three-dimensional} Cartesian coordinate system \((\mathsf{X}^{(0)},\mathsf{Y}^{(0)},\mathsf{Z}^{(0)})\) with the receiver's reference point at the origin, \(\boldsymbol{p}_0=[0,0,0]^\top\), as shown in Fig.~\ref{Fig:MA_coordinate}\,(a).
\begin{figure}
  \centering
  \includegraphics[width=0.9\linewidth]{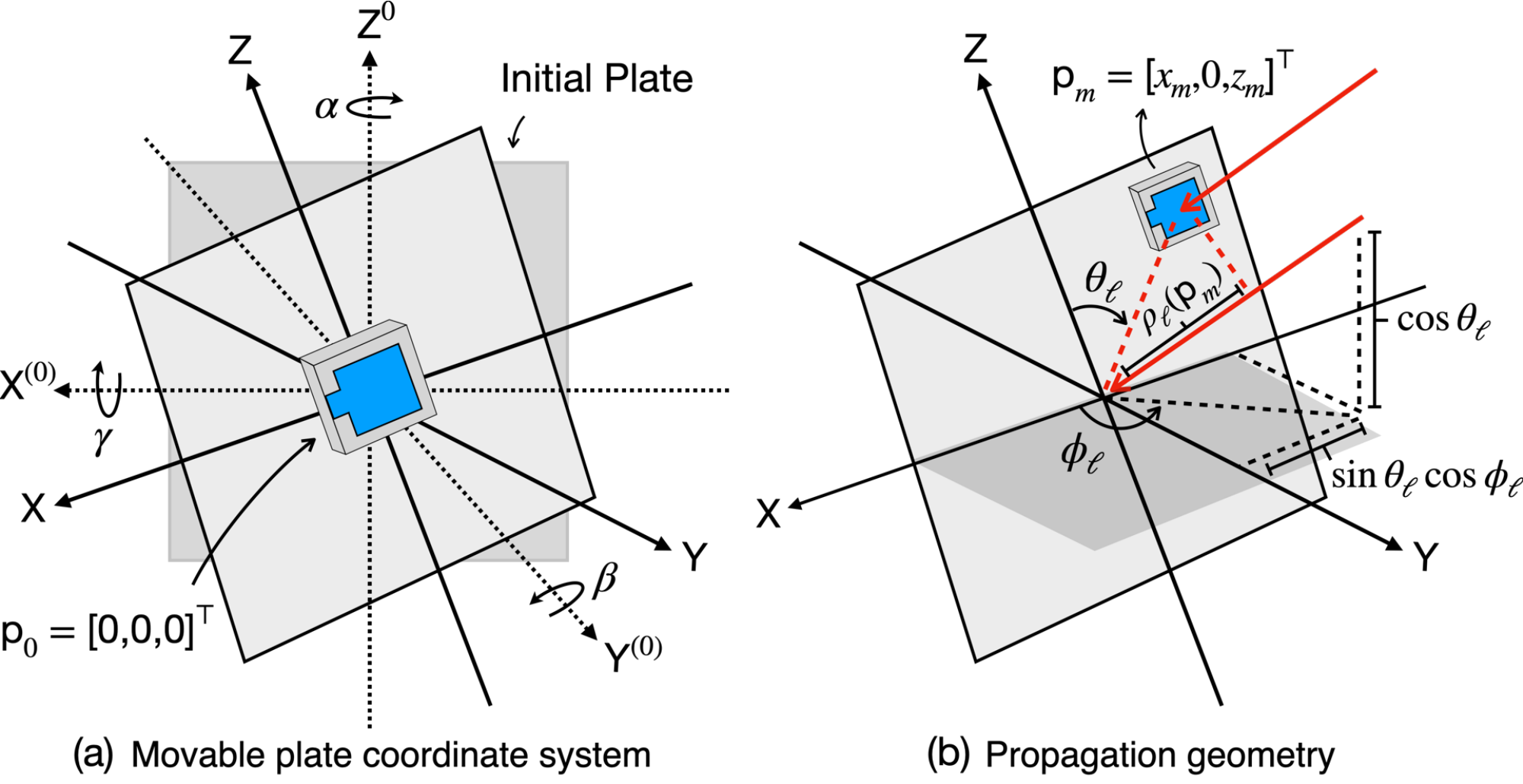}
  \vspace{-3mm}
  \caption{The geometries of movable plate and MA. (a) Coordinate system before and after orientating the movable plate. (b) Additional propagation distance when MA moves from \(\boldsymbol{p}_0\) to \(\boldsymbol{p}_m\).}
  \vspace{-3mm}
  \label{Fig:MA_coordinate}
\end{figure}
We consider $L$ NLoS \emph{signal paths} (SPs), indexed by $\ell\in\{1,\cdots,L\}$. Each SP $\ell$ can be characterized by the elevation AoA $\theta_{\ell}^{(0)}$ and the azimuth AoA $\phi_{\ell}^{(0)}$, defined as
\begin{align}
\theta_{\ell}^{(0)}
= \cos^{-1}\!\left(\frac{[\boldsymbol{b}_\ell]_3}{\lVert \boldsymbol{b}_\ell \rVert}\right),
\quad
\phi_{\ell}^{(0)}
= {\tan}^{-1}\left(\frac{[\boldsymbol{b}_\ell]_2}{[\boldsymbol{b}_\ell]_1}\right),
\label{eq:angles}
\end{align}
where $\boldsymbol{b}_{\ell}\in\mathbb{R}^{3\times 1}$ is the last reflection point before the arrival of SP $\ell$, and $[\cdot]_i$ denotes the $i$-th element of a vector. The corresponding unit direction vector of SP $\ell$, denoted by $\boldsymbol{a}_{\ell}^{(0)}$, is given as 
\begin{align}\label{eq: Initial AoA direction}
\boldsymbol{a}_\ell^{(0)} = [\sin \theta_\ell^{(0)} \cos \phi_\ell^{(0)}, \sin \theta_\ell^{(0)} \sin \phi_\ell^{(0)}, \cos \theta_\ell^{(0)}]^\top.
\end{align}
Throughout this work, we aim to precisely estimate the elevation and azimuth AoAs of all SPs, namely, $\{\theta_{\ell}^{(0)}, \phi_{\ell}^{(0)}\}_{\ell=1}^{L}$.

\subsection{Signal and Channel Models}\label{Sec: Signal Model}
\textit{1) Transmit signal:}
Consider a downlink OFDM system in which the transmitter sends
$\boldsymbol{x}(t)=[x^{(1)}(t),\cdots,x^{(K)}(t)]^{\mathrm{T}}\in\mathbb{C}^{K\times1}$
over $K$ subcarriers. With frequency spacing $\delta$ and frame duration $T=1/\delta$, the $k$-th subcarrier signal, denoted by $x^{(k)}(t)$, is given as
\begin{align}
x^{(k)}(t)=\sqrt{\frac{P}{K}}e^{j2\pi f_k t}, \quad 0\le t<T,
\end{align}
where $P$ is the total transmit power and $f_k$ is the frequency of the $k$-th subcarrier for $k=1,\dots,K$.

 \textit{2) Channel via MA:}
The channel depends on the movable plate orientation and the MA position within it. We describe these two factors as follows. 
 
First, the movable plate is tilted by angles $\alpha$, $\beta$, and $\gamma$ about the $\mathsf{Z}^{(0)}$, $\mathsf{Y}^{(0)}$, and $\mathsf{X}^{(0)}$-axes, respectively. We then define a rotated coordinate system $(\mathsf{X},\mathsf{Y},\mathsf{Z})$ such that the tilted plate lies on the $\mathsf{X}$-$\mathsf{Z}$ plane and its outward normal aligns with $\mathsf{Y}$ (see Fig.~\ref{Fig:MA_coordinate}\,(a)). In this frame, the arrival direction of SP $\ell$ is 
\begin{align}\label{eq: revised AoA direction}
    \boldsymbol{a}_\ell=\boldsymbol{R}^{\top}(\alpha,\beta,\gamma)\boldsymbol{a}^{(0)}_\ell,
\end{align}
where $\boldsymbol{R}(\alpha,\beta,\gamma)$ is the Euler rotation matrix \cite{Diebel2006}. The corresponding local AoAs, denoted by \(\theta_\ell\) and \(\phi_\ell\), can be characterized by replacing \(\boldsymbol a_\ell^{(0)}\) with \(\boldsymbol a_\ell\) in \eqref{eq:angles}, whose detailed form is omitted due to the page limit. 

Second, to account for the MA location, let the MA move from \(\boldsymbol{p}_0\) to \(\boldsymbol{p}_m\), where \(\boldsymbol{p}_m=[x_m,0,z_m]^{\mathrm{T}}\) denotes the MA position in the rotated coordinate system (see Fig.~\ref{Fig:MA_coordinate}\,(b)). Assuming a sufficiently long channel coherence time, the channel parameters remain unchanged during the MA movement. The channel vector of the $\ell$-th SP at location $\boldsymbol{p}_m$ is defined as
$\boldsymbol{h}_\ell(\boldsymbol{p}_m)=[h_\ell^{(1)}(\boldsymbol{p}_m),\dots,h_\ell^{(K)}(\boldsymbol{p}_m)]\in\mathbb{C}^{1\times K}$.
The element $h_{\ell}^{(k)}(\boldsymbol{p}_m)$ denotes the channel response of SP $\ell$ on the $k$-th subcarrier, given by  
\begin{align}
h_{\ell}^{(k)}(\boldsymbol{p}_m)
= \upsilon_\ell \exp\!\left(
-j2\pi f_k\left(\tau_\ell+\frac{\rho_{\ell}(\boldsymbol{p}_m)}{c}\right)
\right),
\end{align}
where $\upsilon_\ell$ is the path attenuation factor and $c\approx 3\times10^8$ m/s is the speed of light. Here, \(\rho_\ell(\boldsymbol p_m)\) is the additional propagation distance, given by
\begin{align}\label{eq: propagation dist}
\rho_\ell(\boldsymbol{p}_m)
&= \boldsymbol{a}_\ell^\top \boldsymbol{p}_m = x_m \sin \theta_\ell \cos \phi_\ell + z_m \cos \theta_\ell.
\end{align}
Consequently, the overall channel at location $\boldsymbol{p}_m$ on the $k$-th subcarrier is
\begin{align}\label{eq: Overall channel}
h^{(k)}(\boldsymbol{p}_m)
= \sum_{\ell=1}^L \upsilon_\ell
\exp\!\left(
-j2\pi f_k\left(\tau_\ell+\frac{\rho_\ell(\boldsymbol{p}_m)}{c}\right)
\right).
\end{align}

 \textit{3) Received signal:}
 The received signal on the $k$-th subcarrier at time $t$, when the MA is at position $\boldsymbol{p}_m$, denoted by $y^{(k)}(t;\boldsymbol{p}_m)$, is given as 
\begin{align}
{y}^{(k)}(t;\boldsymbol{p}_m)\!=\! \sum_{\ell =1}^L \upsilon_\ell 
   e^{-j2\pi f_k\!\left(
      \!\tau_\ell 
      + \frac{\rho_\ell(\boldsymbol{p}_m)}{c}
     \right)\!} 
   x^{(k)}(t) \!+\! w^{(k)}(t),
\end{align}
where \(w^{(k)}(t)\in\mathbb{C}\) is the thermal noise on the \(k\)-th subcarrier, modeled as independent \(\mathcal{CN}(0,N_0)\). The demodulated received signal vector at position \(\boldsymbol p_m\) is denoted by \(\boldsymbol s(\boldsymbol p_m)=[s^{(1)}(\boldsymbol p_m),\dots,s^{(K)}(\boldsymbol p_m)]^\top\), obtained as
\begin{align}
{s}^{(k)}(\boldsymbol{p}_m) = \frac{P}{K} \left ( \sum_{\ell =1}^L \upsilon_\ell e^{-j2\pi f_k( \tau_\ell + {  \frac{\rho_\ell(\boldsymbol{p}_m)}{c} })} \right ) + \bar{w}^{(k)}, \label{eq:received_signal}
\end{align}
where $\bar{w}^{(k)}$ is the demodulated noise component following $\mathcal{CN}(0, PN_0/K)$.

\subsection{Problem Description}\label{sec:problem_description}
Due to the mechanical overhead associated with plate actuation and MA motion, frequent control of the movable plate and the MA is impractical. 
We thus restrict the control procedure as follows. First, the orientation of the movable plate, parametrized by \((\alpha,\beta,\gamma)\), is adjusted once before signal reception. Second, after the plate orientation is fixed, the MA moves linearly along the $\mathsf{X}$- and $\mathsf{Z}$-axes on the tilted plate with constant spacing. During these movements, the MA receives and demodulates a sequence of signals. 

Based on this single orientation control and the two predetermined MA scans, our objective is to accurately estimate the AoAs \(\{(\theta_\ell^{(0)},\phi_\ell^{(0)})\}_{\ell=1}^L\) of all SPs. 
To support both the orientation design and the subsequent AoA estimation, we make the following assumption:
\begin{assumption}[Prior AoA Statistics]\label{Assumption of AoA Prior} \emph{The last reflection point $\boldsymbol{b}_{\ell}$ of SP $\ell$ typically lies on the surfaces of surrounding objects 
(e.g., walls and pillars), which are known in advance from a coarse environment model as  exemplified in Table~\ref{tab:aoa_prior_stats}. Consequently, the induced AoA uncertainty can be approximated by a weakly informative Gaussian prior:
\begin{align}
\label{eq:aoa_statistics}
\theta_\ell^{(0)}\sim \mathcal{N}(\mu_{\ell}, \sigma_{\ell}^2), \quad \phi_\ell^{(0)}\sim\mathcal{N}(\xi_\ell,\varsigma_\ell^2), \quad \theta_\ell^{(0)}\bot \phi_\ell^{(0)},
\end{align}
where the corresponding mean and variance pairs, say $\{\mu_{\ell}, \sigma^2_{\ell}\}$ and $\{\xi_\ell, \varsigma^2_{\ell}\}$, are known a priori.} 
\end{assumption} 
The weak prior statistics in Assumption~\ref{Assumption of AoA Prior} play a dual role: they guide the optimal plate-orientation design and assist the subsequent AoA estimation from the received signals, as detailed in the following section. 


\section[Movable Plate Orientation Control for AoA Estimation]
{Movable Plate Orientation Control}
\label{Sec:Optimization}
This section aims to derive the optimal orientation of the movable plate to enable accurate estimation of each SP's AoA. We first analyze its effect on AoA estimation from a Fisher information perspective, which guides us to formulate the optimal plate control problem in a tractable form next. 

\subsection{Fisher Information Analysis}
To derive the \emph{Fisher information matrix} (FIM) for the AoA estimation problem, let the unknown AoA parameter vector be formed by stacking the AoAs of all $L$ SPs as $\boldsymbol{\psi} = [\boldsymbol{\psi}_{1}^\top, \dots, \boldsymbol{\psi}_{L}^\top]^\top \in \mathbb{R}^{2L \times 1}$, where $\boldsymbol{\psi}_{\ell}=[\theta^{(0)}_{\ell}, \phi_{\ell}^{(0)}]^\top$ collects the elevation and azimuth angles of SP $\ell$ specified in \eqref{eq:angles}. We consider that the MA moves along a linear trajectory and collects received signals at \(M\) positions with spacing \(d\), denoted by \(\{\boldsymbol p_m\}_{m=1}^M\). The corresponding observation vector is formed as
\begin{align}
\boldsymbol s
=
[\boldsymbol s(\boldsymbol p_1)^\top,\boldsymbol s(\boldsymbol p_2)^\top,\dots,\boldsymbol s(\boldsymbol p_M)^\top]^\top
\in\mathbb C^{MK\times1},
\end{align}
where \(\boldsymbol s(\boldsymbol p_m)\) is the received signal vector at position \(\boldsymbol p_m\) defined in \eqref{eq:received_signal}.

Recalling the Gaussian model in \eqref{eq:received_signal}, the resultant FIM $\boldsymbol{I}(\boldsymbol{\psi})\in\mathbb{R}^{2L\times 2L}$ can be expressed as
\begin{align}\label{eq: Gamma}
    \boldsymbol{I}(\boldsymbol{\psi})
    = \frac{2K}{PN_0} \Re \left\{
    \begin{bmatrix}
    \boldsymbol{\Gamma}_{1,1} & \boldsymbol{\Gamma}_{1,2} & \dots & \boldsymbol{\Gamma}_{1,L} \\
    \boldsymbol{\Gamma}_{2,1} & \boldsymbol{\Gamma}_{2,2} & \dots & \boldsymbol{\Gamma}_{2,L} \\
    \vdots & \vdots & \ddots & \vdots \\
    \boldsymbol{\Gamma}_{L,1} & \boldsymbol{\Gamma}_{L,2} & \dots & \boldsymbol{\Gamma}_{L,L}
    \end{bmatrix}
    \right\},
\end{align}
where the $(\ell,u)$-th block $\boldsymbol{\Gamma}_{\ell,u} \in \mathbb{C}^{2 \times 2}$,  $\ell, u \in \{1, \dots, L\}$, is given by 
\begin{align}
\boldsymbol{\Gamma}_{\ell,u} &=  \left(\frac{\partial \boldsymbol{\mu}(\boldsymbol{\psi})}{\partial \boldsymbol{\psi}_\ell}\right)^H \left(\frac{\partial \boldsymbol{\mu}(\boldsymbol{\psi})}{\partial \boldsymbol{\psi}_u}\right)\nonumber\\
&=
\begin{cases}
{c}_\ell \sum_{m=1}^{M} \boldsymbol{g}_{\ell}(\boldsymbol{p}_m)\boldsymbol{g}_{\ell}(\boldsymbol{p}_m)^{\top}, & \ell=u,\\
\sum_{m=1}^{M} {\kappa}_{\ell,u}(\boldsymbol{p}_m)\boldsymbol{g}_{\ell}(\boldsymbol{p}_m)\boldsymbol{g}_{u}(\boldsymbol{p}_m)^{\top}& \ell\neq u,
\end{cases},
\end{align}
where \(\boldsymbol{\mu}(\boldsymbol{\psi})\) denotes the noiseless mean signal vector, and the AoA-gradient vector $\boldsymbol{g}_{\ell}(\boldsymbol{p}_m)\triangleq
\left[\frac{\partial \rho_\ell(\boldsymbol{p}_m)}{\partial \theta_\ell^{(0)}} 
,\frac{\partial \rho_\ell(\boldsymbol{p}_m)}{\partial \phi_\ell^{(0)}}\right]^\top$ represents the sensitivity of  $\rho_\ell(\boldsymbol{p}_m)$ in \eqref{eq: propagation dist} with respect to $\boldsymbol{\psi}_\ell$.
The diagonal-block scaling factor ${c}_\ell \triangleq |\upsilon_\ell|^2 \sum_{k=1}^K \left( \frac{2\pi f_k}{c}\frac{P}{K} \right)^2$ is independent of the MA's position $\boldsymbol{p}_m$. On the other hand, the off-diagonal scaling factor ${\kappa}_{\ell,u}(\boldsymbol{p}_m)$ depends on $\boldsymbol{p}_m$ and is given by
\begin{align}\label{eq: kappa}
\kappa_{\ell,u}(\boldsymbol{p}_m) \triangleq& \ \upsilon_\ell^* \upsilon_u \left( \frac{P}{K} \right)^2\!\! \sum_{k=1}^K \!\!\left(\frac{2\pi f_k}{c} \right)^2 \!\!\! 
e^{j{2\pi f_k \Delta_{\ell,u}(\boldsymbol{p}_m)}},
\end{align}
where 
\begin{align}\label{eq: Delta}
\Delta_{\ell,u}(\boldsymbol{p}_m)=\tau_\ell - \tau_u
+ \frac{\rho_\ell(\boldsymbol{p}_m) - \rho_u(\boldsymbol{p}_m)}{c},
\end{align}
which represents phase difference between two SPs $\ell$ and $u$. 

The estimation error covariance is lower-bounded by the inverse of the FIM. In view of the block structure of \(\boldsymbol{I}(\boldsymbol{\psi})\) specified in \eqref{eq: Gamma}, two strategies can be considered for increasing \(\det(\boldsymbol{I}(\boldsymbol{\psi}))\): (i) strengthening the diagonal blocks and/or (ii) suppressing the off-diagonal coupling blocks. However, increasing the diagonal blocks \(\boldsymbol\Gamma_{\ell,\ell}\) does not necessarily increase \(\mathsf{det}(\boldsymbol{I}(\boldsymbol{\psi}))\). This is because the AoA-gradient vectors \(\boldsymbol{g}_\ell(\boldsymbol{p}_m)\) that strengthen the diagonal blocks also contribute to the off-diagonal blocks, so the gain from \(\boldsymbol\Gamma_{\ell,\ell}\) may be offset by inter-parameter coupling.

\begin{table}[t]
\centering
\caption{Experimental Validation of Assumption \ref{Assumption of AoA Prior}: AoA Statistics in Regions of Interest using Sionna Ray Tracing. The detailed experiments are omitted due to the page limit.}
\vspace{-1mm}
\label{tab:aoa_prior_stats}
\setlength{\tabcolsep}{1.5pt}
\renewcommand{\arraystretch}{1.1}
\begin{tabular}{c|c|cc|cc|c}
\hline
\multirow{2}{*}{\shortstack{Region of\\Interest}} 
& \multirow{2}{*}{\shortstack{Signal\\Path}}
& $\mu_{\ell}$ & $\sigma_{\ell}$ & $\xi_{\ell}$ & $\varsigma_{\ell}$
& \multirow{2}{*}{\shortstack{Correlation\\Coefficient}} \\
\cline{3-6}
& & \multicolumn{2}{c|}{Elevation} & \multicolumn{2}{c|}{Azimuth} & \\
\hline
\multirow{4}{*}{\shortstack{Arc de Triomphe\\(Paris)}}
& SP 1 & $98^\circ$  & $1.41^\circ$ & $16^\circ$  & $3.46^\circ$ & 0.09 \\
& SP 2 & $97^\circ$  & $1.41^\circ$ & $50^\circ$  & $6.63^\circ$ & 0.06 \\
& SP 3 & $95^\circ$  & $1.00^\circ$ & $85^\circ$  & $6.78^\circ$ & 0.02 \\
& SP 4 & $96^\circ$  & $1.00^\circ$ & $116^\circ$ & $3.61^\circ$ & 0.01 \\
\hline
\multirow{3}{*}{\shortstack{Florence Duomo\\(Florence)}}
& SP 1 & $122^\circ$ & $2.45^\circ$ & $4^\circ$   & $2.24^\circ$ & 0.06 \\
& SP 2 & $129^\circ$ & $1.73^\circ$ & $22^\circ$  & $7.35^\circ$ & 0.12 \\
& SP 3 & $102^\circ$ & $3.32^\circ$ & $38^\circ$  & $1.41^\circ$ & -0.07 \\
\hline
\end{tabular}
\vspace{-5mm}
\end{table}

Therefore, we focus on suppressing the off-diagonal blocks \(\boldsymbol\Gamma_{\ell,u}\), which are proportional to \(\kappa_{\ell,u}(\boldsymbol{p}_m)\) in \eqref{eq: kappa} and decrease as \( |\rho_\ell(\boldsymbol{p}_m)-\rho_u(\boldsymbol{p}_m)| \) in \eqref{eq: Delta} increases due to phase misalignment across subcarriers. Motivated by this observation, the next subsection optimizes the movable plate orientation \(\boldsymbol{\varphi}\triangleq[\alpha,\beta,\gamma]\) to enlarge \(|\rho_\ell(\boldsymbol p_m)-\rho_u(\boldsymbol p_m)|\) for all SP pairs under the stochastic AoA prior in Assumption~\ref{Assumption of AoA Prior}.

\subsection{Movable Plate Orientation Control}
Motivated by the above discussion, we next develop a practical plate-orientation control method based on two predetermined MA trajectories along the \(\mathsf{X}\)- and \(\mathsf{Z}\)-axes with step size \(d\), namely,
\begin{align}
\boldsymbol{p}_m^{(1)}=(m-1)d \mathbf{i},\quad 
\boldsymbol{p}_m^{(2)}=(m-1)d \mathbf{k},
\end{align}
where \(\mathbf{i}=[1,0,0]^\top\) and \(\mathbf{k}=[0,0,1]^\top\) are standard basis vectors along the $\mathsf{X}$- and $\mathsf{Z}$-axes, respectively. Collectively, these two scan lines traverse the plate in orthogonal directions, providing broad spatial coverage with respect to the AoA.

Given an arbitrary plate orientation \(\boldsymbol{\varphi}=[\alpha,\beta,\gamma]\), the propagation distance variations of SP \(\ell\) induced by the MA movements $\boldsymbol{p}_m^{(1)}$ and $\boldsymbol{p}_m^{(2)}$ are 
\begin{align}\label{eq:scan}
\rho_{\ell}(\boldsymbol{p}_m^{(1)})&=(m-1)d\,\rho_{\ell}(\mathbf{i})
=(m-1)d\,[\boldsymbol{R}^{\top}(\boldsymbol{\varphi})\boldsymbol{a}^{(0)}_\ell]_1,\nonumber\\
\rho_{\ell}(\boldsymbol{p}_m^{(2)})&=(m-1)d\,\rho_{\ell}(\mathbf{k})
=(m-1)d\,[\boldsymbol{R}^{\top}(\boldsymbol{\varphi})\boldsymbol{a}^{(0)}_\ell]_3,
\end{align}
both of which scale linearly with the travel distance $(m-1)d$.

On the other hand, the remaining terms $\rho_{\ell}(\mathbf{i})$ and $\rho_{\ell}(\mathbf{k})$ are the scalar projections of the rotated arrival vector $\boldsymbol{R}^{\top}(\boldsymbol{\varphi})\boldsymbol{a}^{(0)}_\ell$ onto the unit directions $\mathbf{i}$ ($\mathsf{X}$-axis) and $\mathbf{k}$ ($\mathsf{Z}$-axis), respectively. 
Equivalently, 
\begin{align}
\rho_{\ell}(\mathbf{i})=\mathbf{i}^\top\boldsymbol{R}^{\top}(\boldsymbol{\varphi})\boldsymbol{a}^{(0)}_\ell,\quad
\rho_{\ell}(\mathbf{k})=\mathbf{k}^\top\boldsymbol{R}^{\top}(\boldsymbol{\varphi})\boldsymbol{a}^{(0)}_\ell.
\end{align}
Since $\boldsymbol{\varphi}$ affects the propagation-distance variation through these two projections, optimizing $\boldsymbol{\varphi}$ amounts to shaping $\rho_{\ell}(\mathbf{i})$ and $\rho_{\ell}(\mathbf{k})$.
We therefore focus on \(\rho_{\ell}(\mathbf{i})\) and \(\rho_{\ell}(\mathbf{k})\) as the key quantities governed by $\boldsymbol{\varphi}$. For later use, we also define the projection onto the unit direction $\mathbf{j}$ ($\mathsf{Y}$-axis), obtained by replacing $\mathbf{i}$ or $\mathbf{k}$ with $\mathbf{j}=[0,1,0]^\top$.

Since \(\boldsymbol{a}_{\ell}^{(0)}\) is random under the prior in Assumption~\ref{Assumption of AoA Prior}, the projections \(\rho_\ell(\mathbf{i})\), \(\rho_\ell(\mathbf{k})\), and \(\rho_\ell(\mathbf{j})\) are also random. Under the Gaussian prior
\(\theta_\ell^{(0)} \sim \mathcal{N}(\mu_\ell,\sigma_\ell^2)\) and
\(\phi_\ell^{(0)} \sim \mathcal{N}(\xi_\ell,\varsigma_\ell^2)\), with
\(\theta_\ell^{(0)} \bot \, \phi_\ell^{(0)}\), we define the first and second moments of these projections as
\begin{align}\label{eq:moments}
\bar\rho_{\ell}^{(1)} &= \mathsf{E}[\rho_{\ell}(\mathbf{i})], \quad
\bar\rho_{\ell}^{(2)} = \mathsf{E}[\rho_{\ell}(\mathbf{k})], \quad
\bar\rho_{\ell}^{(3)} = \mathsf{E}[\rho_{\ell}(\mathbf{j})], \nonumber\\
\nu_{\ell}^{(1)} &= \mathsf{E}[\rho_{\ell}(\mathbf{i})^2], \ \
\nu_{\ell}^{(2)} = \mathsf{E}[\rho_{\ell}(\mathbf{k})^2], \ 
\nu_{\ell}^{(3)} = \mathsf{E}[\rho_{\ell}(\mathbf{j})^2].
\end{align}
Because \(\theta_\ell^{(0)}\) and \(\phi_\ell^{(0)}\) are independent Gaussian random variables, these moments admit closed-form expressions via their characteristic functions, given in Appendix~\ref{app:moments}.

Using these moments, we design the objective and constraints of the optimization problem introduced in the~sequel. 

\subsubsection{Objective Function}
The projections of different SPs should be well separated on average. This can be promoted by enlarging the mean separation between every SP pair $(\ell, u)$, as quantified by $|\bar\rho_{\ell}^{(1)}-\bar\rho_{u}^{(1)}|$ and $|\bar\rho_{\ell}^{(2)}-\bar\rho_{u}^{(2)}|$. Accordingly, we define the following objective function:
\begin{align}\label{eq: objective function}
f\left(\boldsymbol{\varphi}\right)=\sum_{\forall \ell, u} (\bar\rho_{\ell}^{(1)}-\bar\rho_{u}^{(1)})^2+(\bar\rho_{\ell}^{(2)}-\bar\rho_{u}^{(2)})^2.
\end{align}
\vspace{-5mm}
\subsubsection{Order Reversal Constraint}
Maximizing \(f(\boldsymbol{\varphi})\) does not explicitly guarantee reliable separation for every SP pair. Even when the objective value is large, some SP pairs may still remain highly overlapped. Furthermore, 
certain orientations may enlarge the mean separations while simultaneously increasing the variability of the SP projections. Their spread can then become comparable to, or even larger than, the mean gap to neighboring SPs, so overlap may still occur in individual realizations. In such cases, the realized ordering of two SP projections may even become opposite to their mean ordering, which we refer to as an \emph{order-reversal} event, as shown in Fig.~\ref{fig:epsilon_separation} (a). To prevent it, we make the following stochastic constraint:

\begin{definition}[$\epsilon$-Separation]
\emph{Two SPs $\ell$ and $u$ are said to be $\epsilon$-separated if
\begin{align}\label{eq:epsilonvalue}
\mathsf{Pr}\!\left[
\big(\rho_{\ell}(\mathbf{i})-\rho_{u}(\mathbf{i})\big)
\big(\bar\rho_{\ell}^{(1)}-\bar\rho_{u}^{(1)}\big)\le 0
\right]\le \epsilon^{(1)},  
\nonumber\\
\mathsf{Pr}\!\left[
\big(\rho_{\ell}(\mathbf{k})-\rho_{u}(\mathbf{k})\big)
\big(\bar\rho_{\ell}^{(2)}-\bar\rho_{u}^{(2)}\big)\le 0
\right]\le \epsilon^{(2)}, 
\end{align}
where \(\epsilon^{(1)}(\ll 1)\) and \(\epsilon^{(2)}(\ll 1)\) denote the allowable probabilities of order reversal on the $\mathsf{X}$- and $\mathsf{Z}$-axes, respectively.} 
\end{definition}\vspace{-1mm}

Together with the mean-separation objective \(f(\boldsymbol{\varphi})\) in \eqref{eq: objective function}, this \(\epsilon\)-separation constraint promotes large mean gaps while controlling the dispersion of each projection, thereby preventing excessive overlap (see Fig.~\ref{fig:epsilon_separation}\,(b)). However, the stochastic nature of \eqref{eq:epsilonvalue} makes the exact closed-form characterization of these constraints intractable. We thus consider the following sufficient condition as a relaxation.
\vspace{-1mm}
\begin{proposition}[Sufficient Condition for $\epsilon$-Separation]\emph{A sufficient condition for all SPs to satisfy \(\epsilon\)-separation is 
\begin{align}\label{eq:epsilon-separation sufficient condition}\tag{C1}
c_{\ell,u}^{(1)}(\boldsymbol{\varphi})\ge 0,\quad c_{\ell,u}^{(2)}(\boldsymbol{\varphi})\ge 0, \quad \forall \ell,u,
\end{align}
where 
\begin{align}
c_{\ell,u}^{(n)}(\boldsymbol{\varphi})
&\triangleq
\left(\bar\rho_{\ell}^{(n)}-\bar\rho_{u}^{(n)}\right)^2 \nonumber\\
&-\!\frac{1\!-\!\epsilon^{(n)}}{\epsilon^{(n)}}
\left(\nu_{\ell}^{(n)}\!+\!\nu_{u}^{(n)}
\!-\!\left(\bar\rho_{\ell}^{(n)}\right)^2
\!-\!\left(\bar\rho_{u}^{(n)}\right)^2
\right),
\end{align}
with \(n\in\{1,2\}\).}
\end{proposition}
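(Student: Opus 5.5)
The plan is to apply Cantelli's one-sided Chebyshev inequality to the scalar random variable
\[
D_{\ell,u}^{(n)}=\rho_\ell(\mathbf{e}_n)-\rho_u(\mathbf{e}_n),
\]
where $\mathbf{e}_1=\mathbf{i}$ and $\mathbf{e}_2=\mathbf{k}$. Its mean is
\[
\bar D=\bar\rho_\ell^{(n)}-\bar\rho_u^{(n)}.
\]
The order-reversal event in \eqref{eq:epsilonvalue} is $\{D\,\bar D\le 0\}$. If $\bar D=0$, condition \eqref{eq:epsilon-separation sufficient condition} forces the bracketed variance-like term to vanish, so $D=0$ almost surely. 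That degenerate case must be handled separately (or excluded, since then $c^{(n)}_{\ell,u}\ge0$ only if both projections are deterministic and equal). I will assume $\bar D\neq 0$ and, by symmetry, $\bar D>0$. The event then becomes $\{D-\bar D\le -\bar D\}$, and Cantelli gives
\[
\mathsf{Pr}[D\le 0]\le \frac{\mathrm{Var}(D)}{\mathrm{Var}(D)+\bar D^2}.
\]

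The second step bounds $\mathrm{Var}(D)$ using only the moments in \eqref{eq:moments}. We have
\[
\mathrm{Var}(D)=\mathrm{Var}(\rho_\ell)+\mathrm{Var}(\rho_u)-2\,\mathrm{Cov}(\rho_\ell,\rho_u),
\]
with $\mathrm{Var}(\rho_\ell(\mathbf{e}_n))=\nu_\ell^{(n)}-(\bar\rho_\ell^{(n)})^2$. Under Assumption~\ref{Assumption of AoA Prior} the AoA priors of different SPs are modeled independently (the assumption is stated per SP, and Table~\ref{tab:aoa_prior_stats} reports only intra-path correlations). I will take the AoAs of distinct SPs as independent, so the covariance vanishes and
\[
\mathrm{Var}(D)=V\triangleq \nu_\ell^{(n)}+\nu_u^{(n)}-(\bar\rho_\ell^{(n)})^2-(\bar\rho_u^{(n)})^2.
\]
If cross-path independence were not intended, the statement's expression would not bound $\mathrm{Var}(D)$ in general. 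This cross-SP independence step is the main point to make explicit, and I expect it to be the only real obstacle; I would state it as part of the interpretation of Assumption~\ref{Assumption of AoA Prior}.

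The third step is pure algebra. The map $V\mapsto V/(V+\bar D^2)$ is increasing, and
\[
\frac{V}{V+\bar D^2}\le\epsilon^{(n)} \iff \bar D^2\ge \frac{1-\epsilon^{(n)}}{\epsilon^{(n)}}\,V,
\]
which is exactly $c_{\ell,u}^{(n)}(\boldsymbol{\varphi})\ge0$. Hence \eqref{eq:epsilon-separation sufficient condition} implies both inequalities in \eqref{eq:epsilonvalue} for every pair $(\ell,u)$ and $n\in\{1,2\}$, i.e., all SPs are $\epsilon$-separated. The case $\bar D<0$ follows identically by applying Cantelli to $-D$. The reversal event includes equality $D=0$, which Cantelli's non-strict form covers.
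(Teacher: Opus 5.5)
Your proposal is correct and follows essentially the same route as the paper's proof. You apply Cantelli's inequality to $\rho_\ell(\mathbf{e}_n)-\rho_u(\mathbf{e}_n)$, write its variance as the sum of the two per-SP variances (the paper relies on cross-SP independence tacitly, whereas you state it), and rearrange $V/(V+\bar D^2)\le\epsilon^{(n)}$ into $c^{(n)}_{\ell,u}(\boldsymbol{\varphi})\ge 0$; your treatment of the non-strict event $\{D\le 0\}$ and of the case $\bar D=0$ is more careful than the paper's, and under the non-degenerate Gaussian prior that case is vacuous, since then $V>0$ forces $c^{(n)}_{\ell,u}(\boldsymbol{\varphi})<0$.
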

\begin{IEEEproof}
See Appendix~\ref{app:moments2}.
\end{IEEEproof}
\begin{figure}[t]
  \centering
  \includegraphics[width=0.9\linewidth]{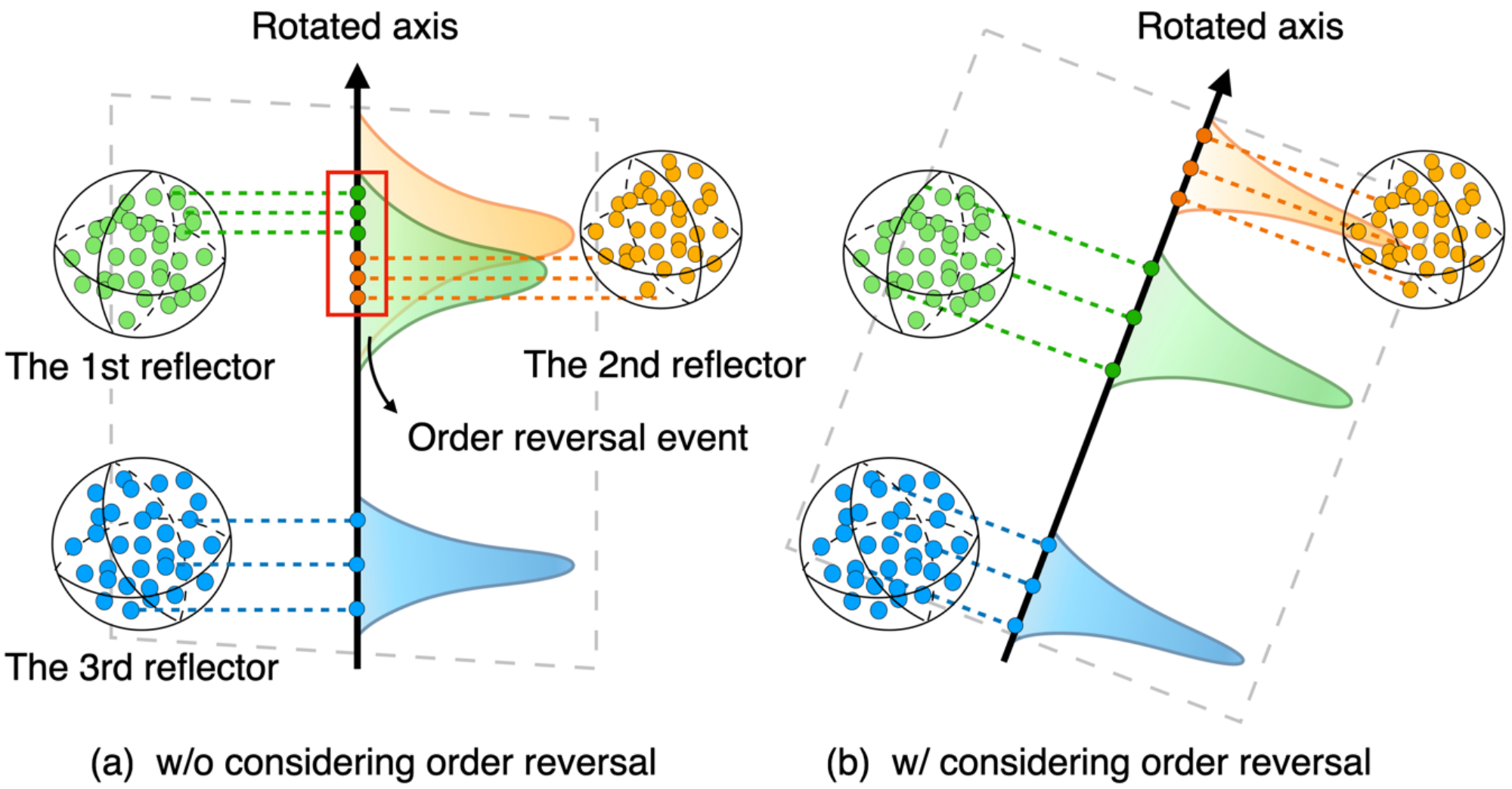}
    \vspace{-3mm}
\caption{Graphical example of the effect of order-reversal constraint. }
\vspace{-5mm}
  \label{fig:epsilon_separation}
\end{figure}

\subsubsection{Front-Side Constraint}
Since the MA is mounted on only one side of the movable plate, each SP should arrive from the front side of the plate to ensure reliable reception. 
The \(\ell\)-th SP is thus said to satisfy the \emph{front-side condition} if
\begin{align}
\mathsf{Pr}\!\left[\rho_\ell(\mathbf{j})\le 0\right]\le \epsilon^{(3)},
\label{eq:front_prob}
\end{align}
where \(\epsilon^{(3)}(\ll\!1)\) represents the allowable probability of violating the front-side condition. 
Similar to the order-reversal constraint, a tractable sufficient condition for the stochastic constraint above can be derived as follows.
\begin{proposition}[Sufficient Condition for Front-Side Incidence]
\emph{A sufficient condition for all SPs to satisfy the front-side incidence constraint is }
\begin{align}\label{eq:front-side sufficient condition}\tag{C2}
c_{\ell}^{(3)}(\boldsymbol{\varphi})\ge 0,
\end{align}
\emph{where}
\(c_{\ell}^{(3)}(\boldsymbol{\varphi})
\triangleq
\bar\rho_{\ell}^{(3)}
-
\sqrt{
\frac{1-\epsilon^{(3)}}{\epsilon^{(3)}}
\left(
\nu_{\ell}^{(3)}-\left(\bar\rho_{\ell}^{(3)}\right)^2
\right)
}.\)

\end{proposition}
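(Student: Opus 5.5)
The plan is to apply the one-sided Chebyshev (Cantelli) inequality to the scalar random variable $X_\ell \triangleq \rho_\ell(\mathbf{j})$. This mirrors the route presumably taken for the order-reversal condition (C1), now with a single variable in place of a pairwise difference. From Appendix~\ref{app:moments} we take the closed-form moments $\bar\rho_\ell^{(3)}=\mathsf{E}[X_\ell]$ and $\nu_\ell^{(3)}=\mathsf{E}[X_\ell^2]$. The variance is then $s_\ell^2 \triangleq \nu_\ell^{(3)}-(\bar\rho_\ell^{(3)})^2 \ge 0$, so the square root in $c_\ell^{(3)}(\boldsymbol{\varphi})$ is well defined.

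First we dispose of a degenerate case. Assume (C2) holds, so $\bar\rho_\ell^{(3)} \ge \sqrt{\tfrac{1-\epsilon^{(3)}}{\epsilon^{(3)}}}\, s_\ell \ge 0$. If $s_\ell=0$ and $\bar\rho_\ell^{(3)}>0$, then $X_\ell=\bar\rho_\ell^{(3)}>0$ almost surely and the claim is trivial. The borderline case $\bar\rho_\ell^{(3)}=s_\ell=0$ has probability zero under the continuous Gaussian prior, since $X_\ell$ is a nonconstant smooth function of $(\theta_\ell^{(0)},\phi_\ell^{(0)})$. I will simply note this, or read (C2) with the mean strictly positive.

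In the main case $\bar\rho_\ell^{(3)}>0$, set $t=\bar\rho_\ell^{(3)}$. Cantelli's inequality gives
\begin{align}
\mathsf{Pr}[X_\ell \le 0] = \mathsf{Pr}\!\left[X_\ell-\bar\rho_\ell^{(3)} \le -t\right] \le \frac{s_\ell^2}{s_\ell^2+t^2}.
\end{align}
The right-hand side is at most $\epsilon^{(3)}$ if and only if $(1-\epsilon^{(3)})s_\ell^2 \le \epsilon^{(3)} t^2$, that is, $t^2 \ge \tfrac{1-\epsilon^{(3)}}{\epsilon^{(3)}}s_\ell^2$. Since $t\ge 0$, this is equivalent to $t \ge \sqrt{\tfrac{1-\epsilon^{(3)}}{\epsilon^{(3)}}}\,s_\ell$, which is precisely $c_\ell^{(3)}(\boldsymbol{\varphi})\ge 0$. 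Hence (C2) implies the front-side condition in \eqref{eq:front_prob} for each $\ell$, and imposing it for all $\ell$ yields the statement.

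The only delicate point is the sign bookkeeping. Cantelli bounds only deviations on one side, so we need the mean to be nonnegative in order to write the event $\{X_\ell\le 0\}$ as a downward deviation of size $t=\bar\rho_\ell^{(3)}$. Condition (C2) automatically forces this, because the square-root term is nonnegative. This is also why the constraint appears with a square root and a linear mean, rather than in the squared form used in (C1). There, the sign of the mean gap is absorbed by multiplying by $\bar\rho_\ell^{(n)}-\bar\rho_u^{(n)}$, so squaring loses no information; here, squaring would discard the sign of the mean.
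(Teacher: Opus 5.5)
Your proof is correct and follows the paper's route: apply Cantelli's inequality to $\rho_\ell(\mathbf{j})$, whose mean is $\bar\rho_\ell^{(3)}$ and variance is $\nu_\ell^{(3)}-(\bar\rho_\ell^{(3)})^2$, then rearrange the requirement that the bound be at most $\epsilon^{(3)}$ into $c_\ell^{(3)}(\boldsymbol{\varphi})\ge 0$. You also check explicitly that (C2) forces $\bar\rho_\ell^{(3)}\ge 0$, which the one-sided bound needs and the paper leaves implicit; the degenerate case is better described as vacuous, since the variance is a deterministic quantity and is strictly positive under the Gaussian prior, rather than as a probability-zero event.
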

\begin{IEEEproof}
See Appendix~\ref{app:moments3}.
\end{IEEEproof}

Combining the above objective and constraints, the plate-orientation design problem is formulated as
\begin{align}
\max_{\boldsymbol{\varphi}} f(\boldsymbol{\varphi}) 
\quad \text{s.t. }
\eqref{eq:epsilon-separation sufficient condition}, \eqref{eq:front-side sufficient condition}.
\label{eq:problemformulation1}\tag{P1}
\end{align}
Since \ref{eq:problemformulation1} is a non-convex constrained problem, we adopt sequential quadratic programming~(SQP) \cite{SQP2000}, which solves a sequence of local quadratic programs obtained by approximating the objective and constraints. The algorithm converges to a local stationary solution, denoted by \(\boldsymbol{\varphi}^*\).

\section{AoA Estimation under Two Linear Movable Antenna Scans}
Based on the optimized orientation \(\boldsymbol{\varphi}^*\) obtained in the preceding section, this section estimates the elevation and azimuth AoAs for all SPs, denoted by \(\{(\theta_{\ell},\phi_{\ell})\}_{\ell=1}^L\), from the two predetermined linear MA scans.  

\subsection{Spatial-Frequency Parameter Extraction}
We first describe how to extract two types of AoA-related parameters, referred to as \emph{spatial-frequency parameters} (SFPs).  
Specifically, when the MA is located at $\boldsymbol{p}_m$, the additional propagation distance in  \eqref{eq: propagation dist} can be rewritten as 
\begin{align}
\rho_\ell(\boldsymbol{p}_m)
= x_m  {u}_\ell + z_m {v}_\ell,
\label{eq:rho_uv_def}
\end{align}
where \({u}_\ell\triangleq \sin\theta_\ell\cos\phi_\ell\) and
\({v}_\ell\triangleq \cos\theta_\ell\) are the SFPs. Recall that 
the MA collects \(M\) samples along each of the \(\mathsf X\)- and \(\mathsf Z\)-axis scans. The parameters \({u}_\ell\) and \({v}_\ell\) are then estimated separately from the measurements obtained along the \(\mathsf X\)- and \(\mathsf Z\)-axes using MUSIC \cite{Schmidt1986} (See Appendix~\ref{app:moments4}). This yields the two SFP sets
\begin{align}
\mathcal{U} = \{ \hat{{u}}^{(1)}, \dots,  \hat{{u}}^{(L)}\}, \quad
\mathcal{V} = \{ \hat{{v}}^{(1)}, \dots,  \hat{{v}}^{(L)}\},
\end{align}
where \(L\) denotes the number of SPs. Since AoA recovery requires one estimate from each axis, the number of resolvable SPs is determined by the smaller number of detected SFPs on the two axes. Here, we assume this number to be equal to the number of detected SFPs. The indices of the detected SFPs in \(\mathcal{U}\) and \(\mathcal{V}\) simply label the extracted components and do not correspond to the SP indices. Therefore, they should be properly paired to recover the elevation and azimuth AoAs, as discussed in the following subsection.

\subsection{Spatial-Frequency Parameter Pairing}
Since the estimated SFPs in \(\mathcal V\) and \(\mathcal U\) are unordered, each set of \(L\) detected SFPs has \(L!\) possible orderings. Let \(\mathcal S\) denote the set of all such orderings. We define the optimal permutation pair as \(\Theta^*,\Xi^*\in\mathcal S\), which specifies the correct ordering of the SFPs in \(\mathcal U\) and \(\mathcal V\) corresponding to the same SP. Under this optimal pairing, the SFP pairs of all SPs are given by \(\{(\hat{\mathrm{u}}^{(\Theta^*_\ell)},\hat{\mathrm{v}}^{(\Xi^*_\ell)})\}_{\ell=1}^L\), where \(\Theta^{*}_\ell\) and \(\Xi^{*}_\ell\) denote the indices of the selected SFPs from \(\mathcal U\) and \(\mathcal V\), respectively.

For correct permutation pairing, we leverage two clues: the first is the sequence of received signals obtained along two linear MA scans in \eqref{eq:scan}, while the second is the prior AoA statistics in Assumption~\ref{Assumption of AoA Prior}, whose details are given as follows.

\subsubsection{Received-Signal Log-Likelihood}
For a given permutation pair \((\Theta,\Xi)\) and the resultant SFP pairs \(\{(\hat{{u}}^{(\Theta_\ell)},\hat{{v}}^{(\Xi_\ell)})\}_{\ell=1}^L\), we construct the steering matrix to model the received signal across the \(2M\) positions as
\begin{align}
\mathbf\Upsilon^{(k)}(\Theta,\Xi)
=
\big[\boldsymbol\Phi^{(k)}_1(\Theta,\Xi),\dots,\boldsymbol\Phi^{(k)}_L(\Theta,\Xi)\big],
\end{align}
where \(\boldsymbol\Phi_\ell^{(k)}(\Theta,\Xi)\in\mathbb C^{2M\times 1}\) denotes the steering vector of the \(\ell\)-th candidate SP. 
Its \(m\)-th element at \(\boldsymbol p_m\) in \eqref{eq:rho_uv_def} is given by
\begin{align}
\big[\boldsymbol\Phi_\ell^{(k)}(\Theta,\Xi)\big]_m
\!=\!
\exp\!\left(
-j2\pi f_k
\frac{x_m \hat{{u}}^{(\Theta_\ell)} + z_m \hat{{v}}^{(\Xi_\ell)}}{c}
\right),
\end{align}
which represents the phase response of the \(\ell\)-th SP conditioned on \((\Theta,\Xi)\). We then stack the received signals on the \(k\)-th subcarrier into
\(
\boldsymbol s^{(k)}=[s^{(k)}(\boldsymbol p_1),\dots,s^{(k)}(\boldsymbol p_{2M})]^\top.
\)
Under the Gaussian model in \eqref{eq:received_signal}, the log-likelihood of the received signal is
\begin{align}
\mathcal L(\Theta,\Xi)
&\triangleq
\log {p}\!\left(\{\boldsymbol s^{(k)}\}_{k=1}^{K}\mid \Theta,\Xi,\{{\hat{\boldsymbol{q}}}^{(k)}\}_{k=1}^{K}\right)
\nonumber\\
&=
-\frac{\sum_{k=1}^{K}
\left\|
\boldsymbol s^{(k)}-\boldsymbol\Upsilon^{(k)}(\Theta,\Xi) \hat{\boldsymbol{q}}^{(k)}
\right\|_2^2}{PN_0/K}
+\mathcal{C}_1,
\label{eq:L_data}
\end{align}
where \(\mathcal{C}_1=-2KM\log\!\left(\pi PN_0/K\right)\). The vector \(\boldsymbol {\hat{q}}^{(k)}=[\hat{q}_1^{(k)},\dots,\hat{q}_L^{(k)}]^\top\in\mathbb C^{L\times 1}\) contains the complex path coefficients  associated with the \(L\) candidate SPs on the \(k\)-th subcarrier, which can be obtained through the least-squares estimate while omitting the details due to the page limit. 

\subsubsection{Prior Statistics}
Since the AoAs in Assumption~\ref{Assumption of AoA Prior} are defined in the initial coordinate system \((\mathsf{X}^{(0)}, \mathsf{Y}^{(0)}, \mathsf{Z}^{(0)})\), the estimated SFP pairs \(\{(\hat{{u}}^{(\Theta_\ell)},\hat{{v}}^{(\Xi_\ell)})\}_{\ell=1}^L\), which are defined on the rotated coordinate system \((\mathsf{X}, \mathsf{Y}, \mathsf{Z})\), should be transformed into the initial one by multiplying \(\boldsymbol R(\boldsymbol{\varphi}^\star)\) to the corresponding unit direction vector. 
Let \((\hat{\theta}^{(0)}_\ell,\hat{\phi}^{(0)}_\ell)\) denote the resultant AoA pair defined in the initial coordinate system.
Under the Gaussian model in Assumption~\ref{Assumption of AoA Prior}, the prior probability function is
\begin{align}
\mathcal{D}(\Theta,\Xi)&\triangleq{p}\!\left(\{\hat{\theta}_\ell^{(0)}\!,\!\hat{\phi}_\ell^{(0)}\}_{\ell=1}^{L} \!\mid\! \Theta,\Xi\right)
\nonumber \\
&=\prod_{\ell=1}^{L}
\frac{\text{exp} \left( {
-\frac{\big(\hat{\theta}^{(0)}_\ell\!-\!\mu_\ell\big)\!^2}{2\sigma_\ell^2}
-\frac{\big(\hat{\phi}^{(0)}_\ell\!-\!\xi_\ell\big)\!^2}{2\varsigma_\ell^2}
}\right)}{
2\pi \sigma_\ell \varsigma_\ell
}.\label{eq:D_data}
\end{align}

By combining \eqref{eq:L_data} and \eqref{eq:D_data}, we formulate the MAP-based pairing rule as 
\begin{align}\label{eq:map_matching_rule}
(\Theta^*,\Xi^*)
=
\arg\max_{\Theta,\Xi\in\mathcal S}
\Big(
\mathcal{L}(\Theta,\Xi)
+
\log \mathcal{D}(\Theta,\Xi)
\Big).
\end{align}
A larger \(\mathcal{L}(\Theta,\Xi)\) indicates better consistency with the received signal, while a larger \(\mathcal{D}(\Theta,\Xi)\) indicates better consistency with the prior AoA statistics. 
Based on the resulting matched pairs, the AoA pairs \(\{(\hat\theta_{\ell}^{(0)}, \hat\phi_{\ell}^{(0)})\}_{\ell=1}^{L}\) can then be recovered.

\section{Simulation Results} \vspace{-1.5mm}
In this section, we evaluate the proposed MA-based AoA sensing framework. The carrier frequency is set to \(28\) GHz, the bandwidth to \(50\) MHz, and the number of subcarriers to \(K=64\). The MA scans \(M=32\) positions along each predetermined linear axis, with \(d=\lambda/2\). We set \(L=4\) with prior mean AoAs \((\mu_\ell,\xi_\ell)\) given by \((115^\circ,55^\circ)\), \((100^\circ,115^\circ)\), \((50^\circ,40^\circ)\), and \((50^\circ,120^\circ)\). The setup is chosen so that the received signals along the \(\mathsf{X}\)- and \(\mathsf{Z}\)-axis scans are highly overlapped across SPs. Five cases with \(\sigma_\ell=\varsigma_\ell\in\{2^\circ,4^\circ,6^\circ,8^\circ,10^\circ\}\) are considered, and the AoAs are drawn from the corresponding Gaussian priors in each Monte Carlo trial. The hyperparameters 
\(\epsilon^{(1)},\epsilon^{(2)},\epsilon^{(3)}\) are numerically tuned for each prior spread. All results are obtained from 5000 Monte Carlo trials, and the performance is evaluated by the joint AoA RMSE, defined as
\(
\left(
\frac{1}{L}
\sum_{\ell=1}^{L}
\left[
\big({\theta}_{\ell}^{(0)}-\hat\theta_{\ell}^{(0)}\big)^2
+
\big({\phi}_{\ell}^{(0)}-\hat\phi_{\ell}^{(0)}\big)^2
\right]
\right)^{\frac{1}{2}}\). 
\begin{figure}[t]
  \centering
  \includegraphics[width=0.9\linewidth,height=0.6\linewidth,keepaspectratio=false]{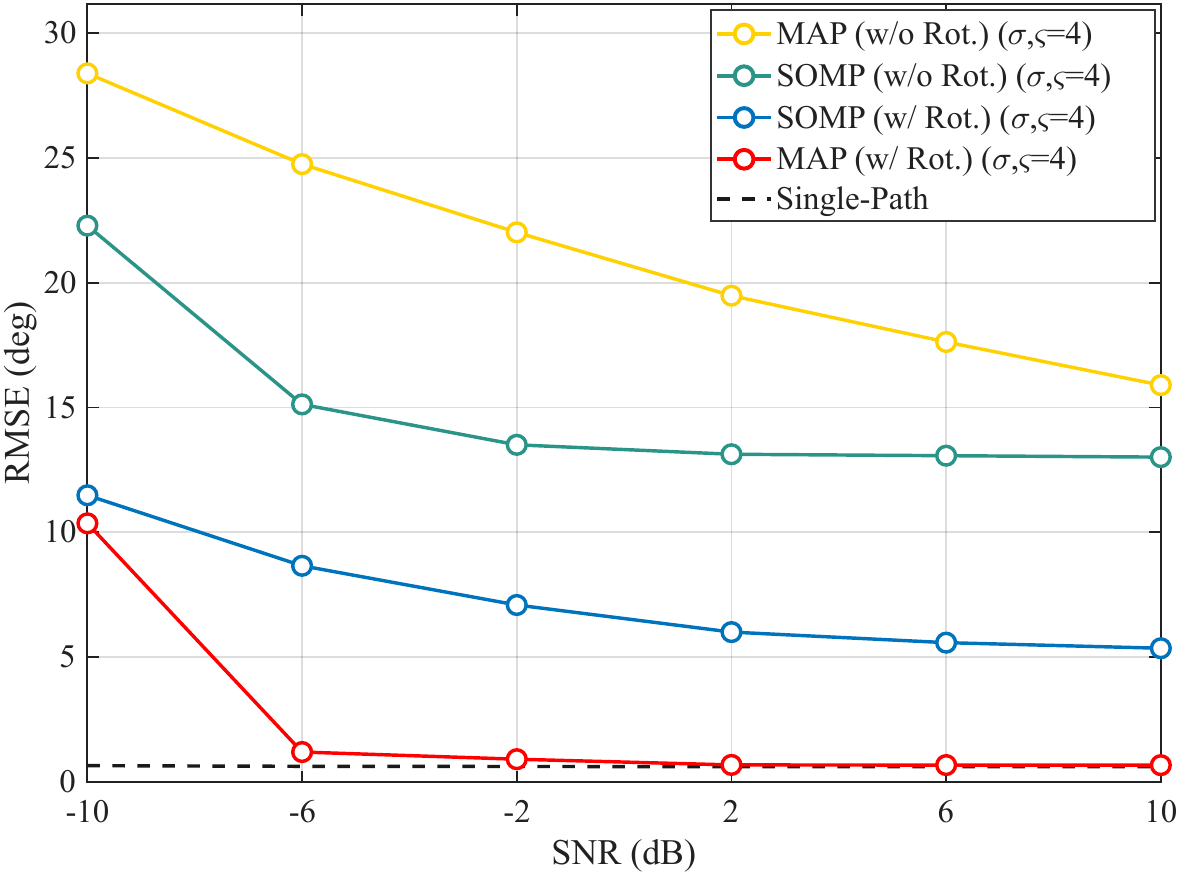}
  \vspace{-3mm}
  \caption{Joint AoA RMSE versus SNR}
  \vspace{-6mm}
  \label{fig:snr}
\end{figure}

For AoA estimation, the proposed {MAP (w/ Rot.)} is compared with four benchmarks: {MAP (w/o Rot.)}, which performs MAP-based pairing in the initial coordinate system, {SOMP (w/o Rot.)}, which performs SOMP-based joint pairing \cite{SCao2025} without rotation, {SOMP (w/ Rot.)}, which applies the SOMP-based pairing in the optimized rotated coordinate system, and {Single-Path}, which serves as a reference case.

Fig.~\ref{fig:snr} shows the joint AoA RMSE versus SNR for \(\sigma=4^\circ\). The rotation-based methods outperform their non-rotated counterparts, confirming the benefit of the optimized plate orientation in improving SP separability. In particular, {MAP (w/ Rot.)} consistently outperforms {SOMP (w/ Rot.)}, since it exploits both the received signals and the prior AoA statistics in the pairing process. As SNR increases, the performance of {MAP (w/ Rot.)} approaches that of {Single-Path}, indicating near-optimal performance in the high-SNR regime.

Fig.~\ref{fig:variance} shows the joint AoA RMSE of {MAP (w/ Rot.)} versus \(\sigma\) at SNR \(=10\) dB. The proposed method remains close to the {Single-Path} reference when \(\sigma\) is small. As \(\sigma\) increases, its performance degrades because the prior becomes less informative and the larger uncertainty makes both MUSIC-based SFP estimation and the subsequent MAP pairing less reliable. Even in this regime, {MAP (w/ Rot.)} still outperforms the benchmark methods, demonstrating its robustness.
\vspace{-2mm}
\label{eq:joint_rmse}
\begin{figure}[t]
  \centering
  \includegraphics[width=0.9\linewidth,height=0.6\linewidth,keepaspectratio=false]{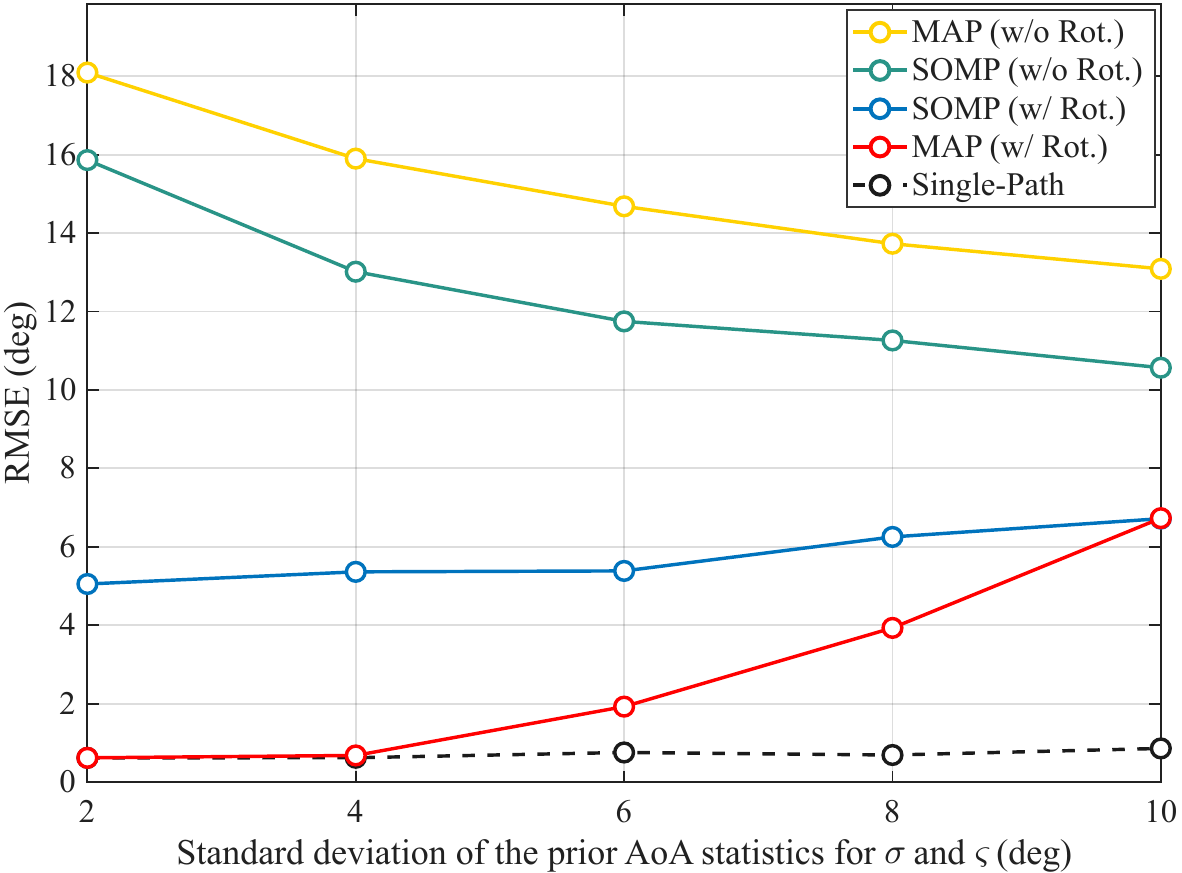}
  \vspace{-3mm}
  \caption{Joint AoA RMSE for different standard deviations of the prior AoA statistics.}
  \vspace{-6mm}
  \label{fig:variance}
\end{figure}

\section{Conclusion}
This paper proposes a prior-aided MA control framework for multi-path sensing. By optimizing the movable-plate orientation and two linear MA scans, the proposed scheme exploits coarse prior AoA statistics to improve the separability of closely spaced SPs and recover the AoAs of all paths via MAP-based pairing. Future work will extend the framework to joint ToA and AoA estimation, as well as broader sensing tasks such as environment mapping and localization.
  \vspace{-2mm}
\bibliographystyle{IEEEtran}
\bibliography{reference/ieeeabrv,reference/reference}

\appendices
\section{Moment Definitions and Derivations}
\label{app:moments}

\subsection{Derivation of the First Moment}
Since both $\theta_\ell^{(0)}$ and $\phi_\ell^{(0)}$ follow Gaussian distributions, we utilize the characteristic function of a generic Gaussian random variable $X \sim \mathcal{N}(\mu, \sigma^2)$, given as 
\begin{align}
\mathsf{E}[\cos X] &= \cos(\mu)e^{-\sigma^2/2}, \nonumber\\
\mathsf{E}[\sin X] &= \sin(\mu)e^{-\sigma^2/2}.
\end{align}

The first moment along the \(\mathbf{i}\)-axis is
\begin{align}
\bar\rho_{\ell}^{(1)}
&=
\mathsf{E}\!\left[ [\boldsymbol{R}^\top(\boldsymbol{\varphi})\boldsymbol{a}^{(0)}_\ell]_1 \right] \nonumber\\
&=
\cos\beta\, \mathsf{E}[\sin\theta_\ell^{(0)}]\,
\mathsf{E}[\cos(\phi_\ell^{(0)}-\alpha)]
-\sin\beta\,\mathsf{E}[\cos \theta_\ell^{(0)}].
\end{align}

Similarly, the first moment along the \(\mathbf{k}\)-axis is
\begin{align}
\bar\rho_{\ell}^{(2)}
&=
\mathsf{E}\!\left[ [\boldsymbol{R}^{\top}(\boldsymbol{\varphi})\boldsymbol{a}^{(0)}_\ell]_3 \right] \nonumber\\
&=
\mathsf{E}[\sin \theta_\ell^{(0)}]
\Big(
(\cos \alpha \sin \beta \cos \gamma + \sin \alpha \sin \gamma)\mathsf{E}[\cos \phi_\ell^{(0)}] \nonumber\\
&\quad
+(\sin \alpha \sin \beta \cos \gamma - \cos \alpha \sin \gamma)\mathsf{E}[\sin \phi_\ell^{(0)}]
\Big) \nonumber\\
&\quad
+(\cos \beta \cos \gamma)\mathsf{E}[\cos \theta_\ell^{(0)}].
\end{align}

Lastly, the first moment along the \(\mathbf{j}\)-axis is
\begin{align}
\bar\rho_{\ell}^{(3)}
&=
\mathsf{E}\!\left[ [\boldsymbol{R}^{\top}(\boldsymbol{\varphi})\boldsymbol{a}^{(0)}_\ell]_2 \right] \nonumber\\
&=
\mathsf{E}[\sin \theta_\ell^{(0)}]
\Big(
(\cos \alpha \sin \beta \sin \gamma - \sin \alpha \cos \gamma)\mathsf{E}[\cos \phi_\ell^{(0)}] \nonumber\\
&\quad
+(\sin \alpha \sin \beta \sin \gamma + \cos \alpha \cos \gamma)\mathsf{E}[\sin \phi_\ell^{(0)}]
\Big) \nonumber\\
&\quad
+(\cos \beta \sin \gamma)\mathsf{E}[\cos \theta_\ell^{(0)}].
\end{align}

\subsection{Derivation of the Second Moment}
Utilizing the characteristic function, the expectations of the squared trigonometric terms are given as
\begin{align} \label{eq: expectations}
\mathsf{E}[\cos^2 \theta_\ell^{(0)}]
&= \frac{1}{2}\bigl(1 + e^{-2\sigma_\ell^2}\cos 2\mu_\ell\bigr), \nonumber\\
\mathsf{E}[\sin^2 \theta_\ell^{(0)}]
&= \frac{1}{2}\bigl(1 - e^{-2\sigma_\ell^2}\cos 2\mu_\ell\bigr), \nonumber\\
\mathsf{E}[\sin \theta_\ell^{(0)} \cos \theta_\ell^{(0)}]
&= \frac{1}{2} e^{-2\sigma_\ell^2}\sin 2\mu_\ell.
\end{align}
Analogous expressions hold for \(\phi_\ell^{(0)}\) with parameters \((\xi_\ell,\varsigma_\ell)\).

For the second moment along the $\mathbf{i}$-axis, expanding the squared term yields
\begin{align}
\nu_\ell^{(1)}
&=
\mathsf{E}\!\left[ \left([\boldsymbol{R}^\top(\boldsymbol{\varphi})\boldsymbol{a}^{(0)}_\ell]_1\right)^2 \right]\nonumber\\
&=
\cos^2 \beta \,\mathsf{E}[\sin^2 \theta_\ell^{(0)}]\,
\mathsf{E}\!\big[\cos^2(\phi_\ell^{(0)}-\alpha)\big] \nonumber\\
&\quad
+\sin^2\beta\, \mathsf{E}[\cos^2\theta_\ell^{(0)}] \nonumber\\
&\quad
-\sin 2\beta\,
\mathsf{E}[\sin \theta_\ell^{(0)}\cos\theta_\ell^{(0)}]\,
\mathsf{E}\!\big[\cos(\phi_\ell^{(0)}-\alpha)\big].
\end{align}
Similarly, the second moment along the \(\mathbf{k}\)-axis is
\begin{align}
\nu_\ell^{(2)}
&=
\mathsf{E}\!\left[ \left([\boldsymbol{R}^\top(\boldsymbol{\varphi})\boldsymbol{a}^{(0)}_\ell]_3\right)^2 \right]\nonumber\\
&=
\mathsf{E}[\sin^2 \theta_\ell^{(0)}]
\Big(
\mathcal{Q}(\boldsymbol\varphi)^2 \mathsf{E}[\cos^2 \phi_\ell^{(0)}]
+\mathcal{H}(\boldsymbol\varphi)^2 \mathsf{E}[\sin^2 \phi_\ell^{(0)}] \nonumber\\
&\quad
+2\mathcal{Q}(\boldsymbol\varphi)\mathcal{H}(\boldsymbol\varphi)
\mathsf{E}[\sin \phi_\ell^{(0)} \cos \phi_\ell^{(0)}]
\Big) \nonumber\\
&\quad
+(\cos \beta \cos \gamma)^2 \mathsf{E}[\cos^2 \theta_\ell^{(0)}] \nonumber\\
&\quad
+2(\cos \beta \cos \gamma)\mathsf{E}[\sin \theta_\ell^{(0)} \cos \theta_\ell^{(0)}] \nonumber\\
&\quad \times
\Big(
\mathcal{H}(\boldsymbol\varphi)\mathsf{E}[\sin \phi_\ell^{(0)}]
+\mathcal{Q}(\boldsymbol\varphi)\mathsf{E}[\cos \phi_\ell^{(0)}]
\Big), 
\end{align}
where $\mathcal{Q}(\boldsymbol\varphi)= \cos \alpha\sin \beta \cos \gamma + \sin \alpha \sin \gamma$ and $\mathcal{H}(\boldsymbol\varphi)= \sin \alpha \sin \beta\cos \gamma - \cos \alpha\sin \gamma.$

Lastly, the second moment along the \(\mathbf{j}\)-axis is 
\begin{align}
\nu_\ell^{(3)}
&=
\mathsf{E}\!\left[ \left(
[\boldsymbol{R}^\top(\boldsymbol{\varphi})\boldsymbol{a}^{(0)}_\ell]_2
\right)^2 \right]\nonumber\\
&=
\mathsf{E}[\sin^2 \theta_\ell^{(0)}]
\Big(
\mathcal{P}(\boldsymbol\varphi)^2 \mathsf{E}[\cos^2 \phi_\ell^{(0)}]
+\mathcal{G}(\boldsymbol\varphi)^2 \mathsf{E}[\sin^2 \phi_\ell^{(0)}] \nonumber\\
&\quad
+2\mathcal{P}(\boldsymbol\varphi)\mathcal{G}(\boldsymbol\varphi)
\mathsf{E}[\sin \phi_\ell^{(0)} \cos \phi_\ell^{(0)}]
\Big) \nonumber\\
&\quad
+(\cos \beta \sin \gamma)^2 \mathsf{E}[\cos^2 \theta_\ell^{(0)}] \nonumber\\
&\quad
+2(\cos \beta \sin \gamma)\mathsf{E}[\sin \theta_\ell^{(0)} \cos \theta_\ell^{(0)}] \nonumber   \\
&\quad \times
\Big(
\mathcal{P}(\boldsymbol\varphi)\mathsf{E}[\cos \phi_\ell^{(0)}]
+\mathcal{G}(\boldsymbol\varphi)\mathsf{E}[\sin \phi_\ell^{(0)}]
\Big),
\end{align}
where
$\mathcal{P}(\boldsymbol\varphi) = \cos \alpha \sin \beta \sin \gamma - \sin \alpha \cos \gamma$ and
$\mathcal{G}(\boldsymbol\varphi) = \sin \alpha \sin \beta \sin \gamma + \cos \alpha \cos \gamma.$

\section{Sufficient Conditions for \texorpdfstring{$\epsilon$}{epsilon}-Separation and Front-Side Incidence}
\subsection{Derivation of the \texorpdfstring{$\epsilon$}{epsilon}-Separation }\label{app:moments2}
We focus on proving $\epsilon^{(1)}$-separation on the $\mathsf{X}$-axis, while the proof of $\epsilon^{(2)}$-separation on the $\mathsf{Z}$-axis follows by the same argument after replacing $\mathbf{i}$ with $\mathbf{k}$ and ${(1)}$ with ${(2)}$.
Given $\bar\rho_{\ell}^{(1)}\geq \bar\rho_{u}^{(1)}$, define the difference random variable $V_{\ell,u}^{(1)}=\rho_{\ell}^{(1)}-\rho_{u}^{(1)}$ whose mean and variance are 
\begin{align}
\mathsf{E}[V_{\ell,u}^{(1)}]&=\bar\rho_{\ell}^{(1)}-\bar\rho_{u}^{(1)},\label{Proof_Proposition1_1}\\
\mathsf{var}[V_{\ell,u}^{(1)}]&=\mathsf{var}[\rho_{\ell}^{(1)}]+\mathsf{var}[\rho_{u}^{(1)}]\nonumber\\
&=\nu_{\ell}^{(1)}-(\bar\rho_{\ell}^{(1)})^2+\nu_{u}^{(1)}-(\bar\rho_{u}^{(1)})^2.\label{Proof_Proposition1_2}
\end{align}
The order-reversal event can be written as
\begin{align}
\mathsf{Pr}[\rho_{\ell}^{(1)}< \rho_{u}^{(1)}]=\mathsf{Pr}[V_{\ell,u}^{(1)}<0]\overset{(a)}{\leq} \frac{\mathsf{var}[V_{\ell,u}^{(1)}]}{\mathsf{var}[V_{\ell,u}^{(1)}]+(\mathsf{E}[V_{\ell,u}^{(1)}])^2},\nonumber
\end{align}
where (a) follows from the Cantelli bound. Therefore,  a sufficient condition for $\mathsf{Pr}[V_{\ell,u}^{(1)}\leq 0]\le \epsilon^{(1)}$ is
\begin{align}\label{Proof_Proposition1_3}
\frac{\mathsf{var}[V_{\ell,u}^{(1)}]}{\mathsf{var}[V_{\ell,u}^{(1)}]+(\mathsf{E}[V_{\ell,u}^{(1)}])^2}\leq \epsilon^{(1)},
\end{align}
which can be converted to $c_{\ell,u}^{(1)}(\boldsymbol{\varphi})\geq0$ by plugging \eqref{Proof_Proposition1_1} and \eqref{Proof_Proposition1_2} into \eqref{Proof_Proposition1_3}.

\subsection{Derivation of the Front-Side Incidence}\label{app:moments3}
Define the random variable \(V_{\ell}^{(3)}=\rho_{\ell}^{(3)}\), whose mean and variance are
\begin{align}
\mathsf{E}[V_{\ell}^{(3)}]
&=\bar\rho_{\ell}^{(3)}, \label{Proof_Proposition1_4}\\
\mathsf{var}[V_{\ell}^{(3)}]
&=\mathsf{var}[\rho_{\ell}^{(3)}] =\nu_{\ell}^{(3)}-\left(\bar\rho_{\ell}^{(3)}\right)^2. \label{Proof_Proposition1_5}
\end{align}
The front-side condition can be written as 
\begin{align}
\mathsf{Pr}[\rho_{\ell}^{(3)}<0]
&=\mathsf{Pr}[V_{\ell}^{(3)}<0] \overset{(b)}{\leq}
\frac{\mathsf{var}[V_{\ell}^{(3)}]}
{\mathsf{var}[V_{\ell}^{(3)}]+\left(\mathsf{E}[V_{\ell}^{(3)}]\right)^2}, \nonumber
\end{align}
where \((b)\) follows from the Cantelli bound. Therefore, a sufficient condition for
$\mathsf{Pr}[V_{\ell}^{(3)}<0]\le \epsilon^{(3)}$ is
\begin{align}\label{Proof_Proposition1_6}
\frac{\mathsf{var}[V_{\ell}^{(3)}]}
{\mathsf{var}[V_{\ell}^{(3)}]+\left(\mathsf{E}[V_{\ell}^{(3)}]\right)^2}
\le \epsilon^{(3)},
\end{align}
which can be converted to $c^{(3)}_{\ell}(\boldsymbol{\varphi}) \ge 0 $ by plugging 
\eqref{Proof_Proposition1_4} and \eqref{Proof_Proposition1_5} into \eqref{Proof_Proposition1_6}.

\section{MUSIC Algorithm for AoA Estimation}\label{app:moments4}
\label{app:MUSIC_algorithm}
We describe the MUSIC algorithm with spatial smoothing. Let \(\{\boldsymbol p_m^{(1)}\}_{m=1}^M\) and \(\{\boldsymbol p_m^{(2)}\}_{m=1}^M\) denote the \(M\) uniformly spaced measurement positions along the \(\mathsf{X}\)- and \(\mathsf{Z}\)-axes, respectively. Taking the \(\mathsf{X}\)-axis as an example, the array is partitioned into \(Q=M-M_{\mathrm{sub}}+1\) overlapping subarrays, where \(M_{\mathrm{sub}}\) satisfies \(L<M_{\mathrm{sub}}\le M-L+1\). For the \(k\)-th subcarrier, the signal vector of the \(q\)-th subarray, \(q=1,\dots,Q\), is
\begin{equation}
\boldsymbol{s}_{q}^{(k)}
=
[s^{(k)}(\boldsymbol{p}_q^{(1)}), s^{(k)}(\boldsymbol{p}_{q+1}^{(1)}), \dots, s^{(k)}(\boldsymbol{p}_{q+M_{\mathrm{sub}}-1}^{(1)})]^\top,
\end{equation}
where \(s^{(k)}(\boldsymbol p_m^{(1)})\) denotes the demodulated received signal as defined in \eqref{eq:received_signal}. By averaging over $K$ subcarriers and $Q$ subarrays, the rank-restored covariance matrix is obtained as
\begin{equation}
\boldsymbol{C}
=
\frac{1}{KQ}\sum_{k=1}^{K}\sum_{q=1}^{Q}\boldsymbol{s}_{q}^{(k)}(\boldsymbol{s}_{q}^{(k)})^H.
\end{equation}
The eigen-decomposition of $\boldsymbol{C}$ yields $\boldsymbol{C} = \boldsymbol{U}_{\mathsf{s}} \boldsymbol{\Lambda}_{\mathsf{s}} \boldsymbol{U}_{\mathsf{s}}^H + \boldsymbol{U}_{\mathsf{n}} \boldsymbol{\Lambda}_{\mathsf{n}} \boldsymbol{U}_{\mathsf{n}}^H$, where $\boldsymbol{U}_{\mathsf{n}}$ spans the noise subspace. The parameters $\{\hat{v}_\ell\}$ are then estimated by identifying the peaks of the MUSIC pseudo-spectrum:
\begin{equation}\label{eq:music_spectrum_final}
S(\vartheta)
=
\frac{1}{\boldsymbol d^H(\vartheta)\boldsymbol U_{\mathsf n}\boldsymbol U_{\mathsf n}^H\boldsymbol d(\vartheta)},
\end{equation}
where \(\boldsymbol d(\vartheta)=[1,e^{-j\frac{2\pi d}{\lambda}\vartheta},\dots,e^{-j\frac{2\pi d}{\lambda}(M_{\mathrm{sub}}-1)\vartheta}]^\top\) is the steering vector parameterized by $\vartheta \in [-1, 1]$.
The same procedure applies to the \(\mathsf{Z}\)-axis measurements.
\end{document}